\definecolor{midnightblue}{rgb}{0, 0.14,0.55}
\newcommand*\mcapinn[2]{\vcenter{\hbox{$\mathsurround=0pt
			\ifx\displaystyle#1\textstyle\else#1\fi\bigcap$}}}
\newcommand*\mcupinn[2]{\vcenter{\hbox{$\mathsurround=0pt
			\ifx\displaystyle#1\textstyle\else#1\fi\bigcup$}}}
\newtheorem{theorem}{Theorem}
\newtheorem{definition}{Definition}
\newtheorem{remark}{Remark}
\newtheorem{proposition}{Proposition}
\newtheorem{assumption}{Assumption}
\newtheorem{example}{Example}
\title{\LARGE \bf
Competitive Equilibrium in Microgrids With Dynamic  Loads
}
\author{Zeinab Salehi, Yijun Chen, Ian R. Petersen, Elizabeth L. Ratnam, and Guodong Shi
\thanks{This work was supported by the Australian Research Council under grants DP190102158, DP190103615, LP210200473, and DP230101014.}
\thanks{Z. Salehi, I. R. Petersen and  E. L. Ratnam are with the Research School of Engineering, The Australian National University, Canberra, Australia. (E-mail:  zeinab.salehi@anu.edu.au; ian.petersen@anu.edu.au; elizabeth.ratnam@anu.edu.au) }
\thanks{Y. Chen and G. Shi are with the  Australian Center for Field Robotics, School of Aerospace, Mechanical and Mechatronic Engineering, The University of Sydney, NSW, Australia. (e-mail: yijun.chen@sydney.edu.au; guodong.shi@sydney.edu.au)}
}
\begin{document}
	
\maketitle
\thispagestyle{empty}
\pagestyle{empty}

\begin{abstract}
	In this paper, we consider microgrids that interconnect prosumers with distributed energy resources and dynamic loads. Prosumers are connected through the microgrid to trade energy and gain profit while respecting the network constraints. We establish a local energy market by defining a  competitive equilibrium which balances energy and satisfies voltage constraints within the microgrid for all time. Using duality theory, we prove that under some convexity assumptions, a competitive equilibrium is equivalent to a social welfare maximization solution. Additionally, we show that a competitive equilibrium is equivalent to a Nash equilibrium of a standard game. In general, the energy price for each prosumer is different, leading to the concept of \textit{locational prices}.  We investigate a case under which  all prosumers have the same locational prices. Additionally, we show that under some assumptions on the resource supply and network topology, locational prices decay to zero after a period of time, implying the available supply will be more than the demand required to stabilize the system. Finally, two numerical examples are provided to validate the results, one of which is a direct application of our results on electric vehicle charging control.
\end{abstract}

\section{Introduction}
A microgrid is an electrical network that connects a group of loads and distributed energy resources (DERs) such as solar panels,  battery storage, and electric vehicles (EVs) --- which can operate in isolation from the bulk grid \cite{sandelic2022reliability}. Market-based approaches can encourage end customers owning DERs (prosumers) to participate in microgrid operations while maximizing their economic profits and utilities \cite{zhou2017incentive}. Prosumers that participate in microgrid operations can improve voltage regulation \cite{AmritPaudel2021},  frequency regulation \cite{King2021Solving}, and  demand-side management \cite{tao2021data}.

To incentivize local market participation within a microgrid we consider a competitive equilibrium \cite{Li2020}. The concept of a competitive equilibrium was first introduced in microeconomics as the pair of resource price and consumed resource which clears the market \cite{mas1995microeconomic}. At a competitive equilibrium, each prosumer maximizes its individual payoff, as the summation of the income from resource trading and the utility from resource consumption,  such that the total supply balances the total demand. A competitive equilibrium is analogous to frequency regulation in a power system \cite{salehi2021social}. 
While a competitive equilibrium maximizes the payoff of individuals considering their local constraints, a social welfare problem maximizes the utility for an entire community considering both local and global constraints \cite{chen2022competitive}. 

A competitive equilibrium is  Pareto optimal, in the sense that no one can change their decision unilaterally without reducing the payoff of at least one other individual \cite{acemoglu2018}. It is proved that under some convexity assumptions, a competitive equilibrium and a  social welfare maximization solution are equivalent \cite{Li2020, mas1995microeconomic}. The relationship between a competitive equilibrium and a generalized Nash equilibrium has been widely studied in the literature \cite{facchinei2010generalized, herves2020market}. Arrow and Debreu \cite{arrow1954existence} were the first to show that a competitive equilibrium can be obtained as a solution to a generalized game with a price player.

Several authors have considered a competitive equilibrium in markets with static loads \cite{Li2020, salehi2021quadratic, salehi2021social}, while others have focused on dynamic loads \cite{chen2022competitive, salehi2022competitive, salehi2022infinite, salehi2023competitive}. 
Dynamic loads such as EVs and thermostatically controlled loads (TCLs) can be operated to improve the resilience of a microgrid\cite{papari2021metrics}.
To ensure the power delivered to microgrid customers is of a high quality, voltages must also be regulated \cite{kang2022event}. Accordingly, several authors have considered market mechanisms focusing on voltage regulation in power systems \cite{liu2022fully, umer2023novel}. For example, Li in \cite{li2015market} introduces a competitive equilibrium to balance the energy in a microgrid subject to voltage constraints and static loads.

In this paper, we extend the work in \cite{li2015market} to dynamic loads by defining a new competitive equilibrium that incorporates grid constraints. We consider locational prices associated with a competitive equilibrium which consist of two parts: (i) the energy price reflecting the energy balance across the network; and (ii) prices reflecting voltage regulation within the microgrid. Depending on the topology of the microgrid, the locational price at each node can be different. The main contributions of the paper are as follows.
\begin{itemize}
\item Using duality theory, we prove that under some convexity assumptions, a  competitive equilibrium is equivalent to a social welfare maximization solution. 

\item  We show that a competitive equilibrium is equivalent to a Nash equilibrium of a standard game.
\item We present conditions under which the locational prices are the same across the network. 

\item We show that under certain assumptions on the resource supply and network topology, the locational prices have a decaying behavior.
\end{itemize}
 
The rest of the paper is organized as follows. In Section \ref{sec:preliminaries}, we describe a microgrid with dynamic loads, and the associated power flow equations. Additionally, we present a real-world application (EV charging) to motivate our problem formulation. In Section \ref{sec:main}, we define a competitive equilibrium and a social welfare maximization problem, followed by investigating some of their properties. Then, we consider the relationship between a competitive equilibrium and a Nash equilibrium. Finally, Section \ref{sec:examples} provides  simulation results (including EV charging), and  Section \ref{sec:conclusions} concludes the paper. 
\section{Problem Formulation} \label{sec:preliminaries}
In this section, we describe a microgrid that serves dynamic loads. Then, we describe EV charging as a real-world application.
\subsection{Microgrids With Dynamic Loads}
Consider a radial microgrid with $n$ prosumers indexed in the set  $\mathcal{N}=\{1, 2, ..., n \}$. Each prosumer $i \in \mathcal{N}$  corresponds to a residential, commercial, or an industrial building with  uncontrollable loads such as  lights,  and  controllable loads such as EVs and TCLs. Each prosumer is also equipped with local energy production such as solar panels. We study the microgrid over a finite time horizon which is divided into $T$ time intervals of length $\Delta$.   Time intervals $t$ are indexed in the set $\mathcal{T} = \{0, \cdots, T-1\}$. The net supply (i.e., generation minus uncontrollable loads) for prosumer $i \in \mathcal{N}$ is denoted by $a_i(t) \in \mathbb{R}$. The dynamics associated with the controllable loads of prosumer $i \in \mathcal{N}$ are described by
\begin{equation}\label{eq_state}
	\mathbf x_i(t+1)= \mathbf A_i \mathbf x_i(t)+ \mathbf B_i \mathbf u_i(t), \quad t \in \mathcal{T},
\end{equation}
where $\mathbf x_i(t) \in \mathbb{R}^d$ is the state (e.g.,  state of charge (SoC) of an EV battery) and  $\mathbf u_i(t) \in \mathbb{R}^m$ is the control input (e.g.,  charge and discharge rate of an EV). The states and control inputs are physically constrained by
\begin{equation*}
	\underline{\mathbf x}_i \leq \mathbf x_i(t) \leq \overline{\mathbf x}_i, 
	\quad  \underline{\mathbf u}_i\leq \mathbf u_i(t) \leq \overline{\mathbf u}_i, 
\end{equation*}
where $\underline{\mathbf x}_i $ and $\underline{\mathbf u}_i$ are the lower bounds, and $\overline{\mathbf x}_i$ and $\overline{\mathbf u}_i$  are the upper bounds. Reaching the state $\mathbf x_i(t)$ and applying the control input $\mathbf u_i(t)$, prosumer $i$ achieves a running utility $f_i(\mathbf x_i(t), \mathbf u_i(t)) : \mathbb{R}^{d} \times \mathbb{R}^{m} \mapsto \mathbb{R}$ at time $t \in \mathcal{T}$, and a terminal utility $\Phi_i(\mathbf x_i(T)) : \mathbb{R}^{d} \mapsto \mathbb{R}$ at the final time step. The energy consumption/withdrawal as a result of taking the control action $\mathbf u_i(t)$ is denoted by $h_i(\mathbf u_i(t)): \mathbb{R}^{m} \mapsto \mathbb{R}$.
Each prosumer $i$ has a surplus/shortage of energy $a_i(t) - h_i(\mathbf u_i(t))$ which can be traded through the microgrid. In what follows, we describe power flows in a microgrid and the associated voltage constraints that must be respected.
\subsection{Microgrid Power Flows}
Each prosumer is connected to a node in the microgrid. The  network has $n+1$ nodes indexed in the set $\mathcal{N}\cup \{0\}$ where $0$ is the reference node (feeder). The set of all lines in the network is denoted by $\mathcal{E}$, where  $(i, j) \in \mathcal{E}$ represents the line connecting nodes $i$ and $j$ with the resistance  $r_{ij}$ and reactance $x_{ij}$. Denote by $V_i(t) \in \mathbb{C}$  the voltage phasor at node $i \in \mathcal{N}\cup \{0\}$ and time step $t$, and $v_i(t)=|V_i(t)|^2$. The square magnitude of the reference voltage is given and fixed, i.e., $v_0(t)=v_0$. Let $p_i(t) \in \mathbb{R}$ and $q_i(t) \in \mathbb{R}$ denote the  net average active and reactive power injections at node $i \in \mathcal{N}\cup \{0\}$ and time interval $t \in \mathcal{T}$, respectively. We suppose there is no external power injection to the network, i.e., $p_0(t)=q_0(t)=0$. The average active and reactive power flows from node $i$ to $j$ are denoted by $P_{ij}(t)$ and $Q_{ij}(t)$, respectively. Power flow  in the network can be described by the LinDistFlow model \cite{baran1989optimal}
\begin{equation}\label{eq1}
	\begin{aligned}
		&P_{ij}(t) = -p_j(t) + \sum_{k:(j,k) \in \mathcal{E}} P_{jk}(t),\\
		&Q_{ij}(t) = -q_j(t) + \sum_{k:(j,k) \in \mathcal{E}} Q_{jk}(t),\\
		&v_i (t)- v_j(t)= 2\big(r_{ij}P_{ij}(t) + x_{ij}Q_{ij}(t) \big).
	\end{aligned}
\end{equation}
According to \eqref{eq1}, the injected power must be balanced; that is,
\begin{equation}\label{eq_balance}
	\sum_{i=1}^{n} p_i(t)=0.
\end{equation}
Let $\mathcal{P}_i$ denote the unique path from node $0$ to node $i$. The voltage equation in \eqref{eq1} can be simplified as in \cite{farivar2013equilibrium} to
\begin{equation}\label{eq3}
	v_i (t)=  v_0 +\sum_{k=1}^{n} \big(R_{ik} p_k (t)+  X_{ik}  q_k (t)\big), \quad i \in \mathcal{N},
\end{equation}
where we have 
$R_{ik}=2\sum_{(h, l) \in \mathcal{P}_i \cap \mathcal{P}_k} r_{hl}$, and $X_{ik}=2\sum_{(h, l) \in \mathcal{P}_i \cap \mathcal{P}_k} x_{hl}$. At each node $i \in \mathcal{N}$, the voltage magnitude must remain within an acceptable range (typically  $\pm5 \% $ of a nominal voltage). For simplicity, we suppose  $q_i(t)$ is a given constant at each node $i \in \mathcal{N}$, i.e., $q_i(t)=q_i$. The voltage constraints can be written as 
\begin{equation}\label{eq_voltage}
	\underline {v}_i \leq \sum_{k =1}^n R_{ik}  p_k(t) \leq \overline { v}_i, \quad i \in \mathcal{N},
\end{equation}
where $\underline { v}_i$ and $\overline { v}_i$ are the lower and upper bounds on $v_i (t) -  v_0 -\sum_{k=1}^{n}   X_{ik}  q_k $. In the remainder, by power we mean average active power.

For node (prosumer) $i \in \mathcal{N}$, let  $\lambda_i(t) \in \mathbb{R}$ denote  the  \textit{locational price} for  unit  energy injection (energy trading) during time interval $t \in \mathcal{T}$.  Given $\lambda_i(t)$, prosumer $i$ decides about the amount of its  power injection $p_i(t)$ that is physically constrained by $p_i(t) \Delta\leq a_i(t) - h_i(\mathbf u_i(t))$; the prosumer aims to maximize  its total payoff as the summation of the utility from energy consumption and the income $\lambda_i(t)p_i(t)\Delta$ from energy trading, over the whole horizon.
\subsection{Motivating Example (EV Charging)}\label{sec:motivating_example}
As a real-world application, one may consider EV charging, where  each prosumer corresponds to a residential building with EVs as controllable loads. The dynamics are represented by \cite{nimalsiri2021coordinated}
\begin{equation}\label{eq_EV}
	x_i(t+1)= x_i(t)+ \eta_i u_i(t) \Delta,
\end{equation}
where $x_i(t)$ is the SoC indicating the energy remained in the battery (in kWh), $u_i(t)$ is the charge/discharge rate (in kW), and $\eta_i$ is the charge/discharge efficiency, for EV $i$ at time $t$. The consumed/withdrawn energy by EV $i$ is $h_i(u_i(t))=u_i(t)\Delta$, where $\Delta$ is the sampling time (in hour). Each EV has $C$  kWh battery capacity. To extend  the EV battery lifespan, the SoC should remain within $20\%$ to $85 \%$ of the battery capacity \cite{xing2015decentralized}, so $0.2C \leq x_i(t) \leq 0.85C$. Additionally, the charge/discharge rate is bounded by $-\overline {u} \leq u_i(t) \leq \overline {u}$, where $\overline {u}$  depends on the charger or the battery type. We study the network over a time horizon $T$.  All EVs arrive for charging/discharging at the beginning of the horizon and depart at the end of the horizon. Each prosumer $i \in \mathcal{N}$ tries to maximize its  payoff with the utility functions $f_i(x_i(t), u_i(t))=-u_i^2(t)$ and $\Phi_i(x_i(T))=-(x_i(T)-0.85C)^2$. We aim to design a local market which serves the interests of prosumers while respecting the network constraints \eqref{eq_balance} and \eqref{eq_voltage}.
\section{Main Results}\label{sec:main}
In this section, we propose a decentralized energy market that respects the grid constraints, in particular voltage constraints,  by defining a competitive equilibrium. 

\subsection{Decentralized Energy Market}
For prosumer $i \in \mathcal{N}$,  denote $\mathbf U_i=(\mathbf u_i^\top(0), \dots, \mathbf u_i^\top(T-1))^\top$ and $\mathbf p_i=( p_i(0), \dots,  p_i(T-1))^\top$  as the vectors of control inputs and  power injections over the time horizon, respectively. At time step $t \in \mathcal{T}$, Denote $\mathbf u(t)= (\mathbf u_1^\top(t), \dots, \mathbf u_n^\top(t) )^\top$, $\mathbf p(t)= ( p_1(t), \dots,  p_n(t) )^\top$, and  $\mathbf x(t)= (\mathbf x_1^\top(t), \dots, \mathbf x_n^\top(t) )^\top$ as the vectors of  control inputs, power injections, and dynamical states corresponding to all prosumers, respectively. We denote by $\mathbf U=(\mathbf u^\top(0), \dots, \mathbf u^\top(T-1))^\top$ and $\mathbf P=(\mathbf p^\top(0), \dots, \mathbf p^\top(T-1))^\top$  the vectors of control inputs and  power injections associated with all prosumers over the whole horizon, respectively.  Let $\boldsymbol  \lambda(t) = (\lambda_1(t), \dots, \lambda_{n}(t))^\top$ and $\boldsymbol{\Lambda}=(\boldsymbol{\lambda}^\top(0), \dots, \boldsymbol{\lambda}^\top(T-1))^\top$ denote the vectors of  energy prices associated with all prosumers at time step $t\in \mathcal{T}$ and over the whole time horizon, respectively.  
\begin{definition}\label{definition1}
	The triplet $(\boldsymbol\Lambda^\ast, \mathbf{U}^\ast, \mathbf{P}^\ast)$ is a \textit{competitive equilibrium} if the following conditions are satisfied.
	\begin{itemize}
		\item[(i)]  At the equilibrium, the individual payoff function of each prosumer $i\in \mathcal{N}$ is maximized subject to the local constraints; i.e., $(\mathbf U_i^\ast, \mathbf p_i^\ast)$ solves
		\begin{equation}\label{opt_LTD_1}
			\resizebox {0.447\textwidth} {!} {$
			\begin{aligned}
				\max_{{\mathbf U_i}, \mathbf p_i} \quad &  \sum_{t=0}^{T-1} \Big(f_i(\mathbf x_i(t), \mathbf u_i(t)) + \lambda^\ast_i(t) p_i(t)\Delta \Big)+{\Phi_i(\mathbf x_i(T)}) \\
				{\rm s.t.} 
				\quad &  \mathbf x_i(t+1)= \mathbf A_i \mathbf x_i(t)+ \mathbf B_i \mathbf u_i(t),  \\
				\quad &  p_i(t)\Delta \leq a_i (t)- h_i(\mathbf u_i(t)),  \\
				\quad & \underline{\mathbf x}_i \leq \mathbf x_i(t) \leq \overline{\mathbf x}_i, \\
				\quad & \underline{\mathbf u}_i\leq \mathbf u_i(t) \leq \overline{\mathbf u}_i,
				\quad  t \in \mathcal{T}.
			\end{aligned}$}
		\end{equation}
		\item[(ii)] At the equilibrium, the power flow is balanced; i.e.,
		\begin{equation}\label{constraint_balancing}
			\sum_{i=1}^n p_i^\ast(t) =0, \quad t\in \mathcal{T}.
		\end{equation} 
	 \item[(iii)] At the equilibrium, the voltage constraints are satisfied; i.e.,
	 \begin{equation}\label{equilibrium_voltage}
	 	\underline {v}_i \leq \sum_{k =1}^n R_{ik}  p_k^\ast(t) \leq \overline { v}_i, \,\, i \in \mathcal{N},  t \in \mathcal{T}.
	 \end{equation} 
	
		\item[(iv)] Considering $\boldsymbol\Lambda^\ast$, the locational prices satisfy
		\begin{equation}\label{constraint_price}
			\lambda_i^\ast(t)=\alpha^\ast(t)+\sum_{k=1}^{n} \big[\underline\xi_k^\ast(t)-\overline\xi_k^\ast (t) \big] R_{ki}, \,\, i \in \mathcal{N},  t \in \mathcal{T},
		\end{equation}
	where $\alpha^\ast(t)$ is the energy price and $\underline\xi_k^\ast(t)-\overline\xi_k^\ast (t)$ is the  price for a unit voltage change at node $k$.
		\item[(v)]   If the voltage constraints are not binding, the price for voltage change is zero; that is,
		\begin{equation}\label{constraint_voltage}
			\resizebox {0.44\textwidth} {!} {$
				\begin{aligned}
			\underline \xi_i^\ast(t)\Big(\underline v_i - \sum_{k =1}^n R_{ik}  p_k^\ast(t)\Big)=\overline \xi_i^\ast(t) \Big( \sum_{k =1}^n R_{ik}  p_k^\ast (t)-\overline{ v}_i \Big)=0,\\ \,\, i \in \mathcal{N},  t \in \mathcal{T}.
		\end{aligned}$}
		\end{equation}
	\end{itemize}
\end{definition}

Under the proposed competitive equilibrium, a local energy market can be established which respects the voltage constraints and balances the power flow. The optimal price $\lambda^\ast_i(t)$ is determined by the grid aggregator while the optimization problem in \eqref{opt_LTD_1} is solved independently by each prosumer who acts as a price taker. Next, we define the notion of social welfare.

\begin{definition}
	The pair $(\mathbf{U}^\star, \mathbf{P}^\star)$ maximizes the \textit{social welfare} if it maximizes the total utilities of all prosumers over the whole horizon subject to the local and global network constraints. In other words, $(\mathbf{U}^\star, \mathbf{P}^\star)$ is an optimizer to
		\begin{equation}\label{eq4}
		\begin{aligned}
			\max_{{\mathbf U}, \mathbf P} \quad &  \sum_{i=1}^{n}\Big(\sum_{t=0}^{T-1} f_i(\mathbf x_i(t), \mathbf u_i(t))+ {\Phi_i(\mathbf x_i(T))} \Big)\\
			{\rm s.t.} 
			\quad &  \mathbf x_i(t+1)= \mathbf A_i \mathbf x_i(t)+ \mathbf B_i \mathbf u_i(t),  \\
			\quad &  p_i(t)\Delta \leq a_i (t)- h_i(\mathbf u_i(t)),  \\
			\quad &	\sum_{i=1}^n p_i(t) = 0,\\
			\quad &\underline {v}_i \leq \sum_{k =1}^n R_{ik}  p_k(t) \leq \overline { v}_i, \\
			\quad & \underline{\mathbf x}_i \leq \mathbf x_i(t) \leq \overline{\mathbf x}_i, \\
			\quad & \underline{\mathbf u}_i\leq \mathbf u_i(t) \leq \overline{\mathbf u}_i,   \quad  i \in \mathcal{N}, \, t \in \mathcal{T}.
		\end{aligned}
	\end{equation}
\end{definition}
\medskip

Social welfare maximization is important from the system-level perspective, and it is solved by the aggregator.

\begin{remark}
	The competitive equilibrium  in Definition \ref{definition1}, which considers dynamic loads in a microgrid, is an extension of the works presented in \cite{li2015market}  (considering  static loads in a microgrid) and \cite{chen2022competitive} (considering dynamic loads without grid constraints).
\end{remark}
\subsection{Competitive Equilibrium \& Social Welfare}
 In the following, we show that under some conditions, a competitive equilibrium and a social welfare maximization solution coincide with each other. 
\begin{assumption}\label{assumption1}
	For $i \in \mathcal{N}$, let $f_i(\cdot)$ and $\Phi_i(\cdot)$ be concave functions, and $h_i(\cdot)$ be a convex function. Suppose Slater's condition holds for \eqref{opt_LTD_1} and \eqref{eq4} \cite{boyd2004convex}.
\end{assumption}

\begin{theorem}\label{theorem1}
Let Assumption \ref{assumption1} hold. Given a feasible initial condition $\mathbf x(0)$,  a competitive equilibrium and a social welfare maximization solution are equivalent, i.e., the following statements hold. 
\begin{itemize}
\item [(i)] If $( \mathbf U^\star, \mathbf P^\star)$ maximizes the social welfare, then there exists $\boldsymbol{\Lambda}^\ast \in \mathbb{R}^{nT}$ such that $(\boldsymbol{\Lambda}^\ast, \mathbf U^\star, \mathbf P^\star)$ is a competitive equilibrium.
\item [(ii)]  If $(\boldsymbol{\Lambda}^\ast, \mathbf U^\ast, \mathbf P^\ast)$ is a competitive equilibrium, then $(\mathbf U^\ast, \mathbf P^\ast)$ maximizes the social welfare.
\end{itemize}
\end{theorem}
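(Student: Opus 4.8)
\emph{Proposed proof.} The plan is to exploit convexity and Lagrangian duality. Under Assumption \ref{assumption1}, each prosumer problem \eqref{opt_LTD_1} (for a fixed price) and the social welfare problem \eqref{eq4} are convex programs for which Slater's condition holds, so in both cases the Karush--Kuhn--Tucker (KKT) conditions are necessary and sufficient for optimality. I would write the KKT system of \eqref{eq4}, attaching a multiplier $\alpha(t)$ to the power balance constraint \eqref{eq_balance} and multipliers $\underline\xi_k(t),\overline\xi_k(t)\ge 0$ to the lower and upper voltage inequalities in \eqref{eq_voltage}, while keeping the purely local constraints (the dynamics \eqref{eq_state}, the trading bound $p_i(t)\Delta\le a_i(t)-h_i(\mathbf u_i(t))$, and the box constraints on $\mathbf x_i$ and $\mathbf u_i$) with their own multipliers. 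The key observation is that the partial Lagrangian obtained by dualizing only the two coupling families of constraints is separable across $i$: using the symmetry $R_{ik}=R_{ki}$, the terms linear in $p_k(t)$ produced by $\alpha(t)$ and $\underline\xi(t),\overline\xi(t)$ collect into $\big(\alpha(t)+\sum_i[\underline\xi_i(t)-\overline\xi_i(t)]R_{ki}\big)p_k(t)$, which is precisely $\lambda_k(t)p_k(t)$ for $\lambda_k(t)$ as in \eqref{constraint_price}. Hence, over the remaining local constraints, maximizing this partial Lagrangian is equivalent to solving the $n$ decoupled prosumer problems \eqref{opt_LTD_1}.

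For part (i), I would start from a social welfare optimizer $(\mathbf U^\star,\mathbf P^\star)$ (which exists since $\mathbf x(0)$ is feasible). By strong duality there exist optimal dual variables $\alpha^\ast(t)$ and $\underline\xi_k^\ast(t),\overline\xi_k^\ast(t)\ge 0$ for the coupling constraints, and $(\mathbf U^\star,\mathbf P^\star)$ maximizes the partial Lagrangian over the local constraints. Defining $\lambda_i^\ast(t)$ through \eqref{constraint_price} gives (iv) by construction; separability then yields that $(\mathbf U_i^\star,\mathbf p_i^\star)$ solves \eqref{opt_LTD_1} for each $i$, i.e.\ condition (i); primal feasibility of $(\mathbf U^\star,\mathbf P^\star)$ gives (ii) and (iii); and complementary slackness for the two voltage inequalities gives (v). For part (ii), I would start from a competitive equilibrium $(\boldsymbol\Lambda^\ast,\mathbf U^\ast,\mathbf P^\ast)$ with associated prices $\alpha^\ast(t),\underline\xi_k^\ast(t),\overline\xi_k^\ast(t)\ge 0$ from \eqref{constraint_price}--\eqref{constraint_voltage}. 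Each $(\mathbf U_i^\ast,\mathbf p_i^\ast)$, being an optimizer of the convex program \eqref{opt_LTD_1}, satisfies its KKT conditions with some local multipliers. I would then verify that these same local multipliers, together with $\alpha^\ast(t)$ on \eqref{eq_balance} and $\underline\xi_k^\ast(t),\overline\xi_k^\ast(t)$ on \eqref{eq_voltage}, constitute a KKT point of \eqref{eq4} at $(\mathbf U^\ast,\mathbf P^\ast)$: stationarity in $\mathbf u_i(t)$ is the same in both problems (the coupling constraints are free of $\mathbf u_i$); stationarity in $p_i(t)$ matches because \eqref{constraint_price} is exactly the relation that turns $\lambda_i^\ast(t)$ into the coefficient of $p_i(t)$ generated by the balance and voltage multipliers in \eqref{eq4}; dual feasibility holds since $\underline\xi_k^\ast,\overline\xi_k^\ast\ge 0$; primal feasibility for the coupling constraints is (ii)--(iii); and complementary slackness for them is (v), with local complementary slackness inherited from \eqref{opt_LTD_1}. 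Sufficiency of KKT for \eqref{eq4} then gives social welfare optimality.

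The routine steps are the elimination of the states $\mathbf x_i(t)$ via the dynamics \eqref{eq_state} (a backward recursion in the stationarity conditions) and writing out the box-constraint multipliers. The main obstacle I anticipate is the careful bookkeeping that matches the $p_i(t)$-stationarity in the two formulations: one must check that the voltage contribution $\sum_k(\underline\xi_k^\ast(t)-\overline\xi_k^\ast(t))R_{ki}$ enters with the correct sign and index order (this is where $R_{ik}=R_{ki}$ is used), and that the multiplier of the coupled trading bound $p_i(t)\Delta\le a_i(t)-h_i(\mathbf u_i(t))$ is shared consistently between \eqref{opt_LTD_1} and \eqref{eq4}. I would also be explicit that the nonnegativity $\underline\xi_k^\ast(t),\overline\xi_k^\ast(t)\ge 0$ is part of the equilibrium data (being KKT multipliers of inequality constraints), so that dual feasibility for \eqref{eq4} is available in part (ii).
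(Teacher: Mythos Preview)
Your proposal is correct and follows essentially the same route as the paper: form the (partial) Lagrangian of \eqref{eq4} by dualizing the coupling constraints, observe that it separates across prosumers with the $p_i(t)$-coefficient collapsing to $\lambda_i^\ast(t)$ via \eqref{constraint_price}, and use strong duality (Slater) to identify the pieces with the prosumer problems \eqref{opt_LTD_1}; the paper carries this out in saddle-point language (also dualizing the trading bound with multiplier $\psi_i(t)$ and keeping the box constraints as a domain $\mathbb{U}_i$), while you phrase it via KKT, but the content is the same. Your treatment of part (ii) is in fact more explicit than the paper's, which simply says the argument is obtained ``by reversing the proof of part (i)''.
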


\begin{proof}
(i) According to \eqref{eq_state}, the state $\mathbf x_i(t)$ can be written as a linear combination of the initial state $\mathbf x_i(0)$ and the control sequence $\mathbf U_i$ as
\begin{equation}\label{eq2}
	\mathbf x_i(t)=\mathbf A_i^t \mathbf x_i(0)+ \sum_{j=0}^{t-1}{\mathbf A_i^{t-j-1} \mathbf B_i \mathbf u_i(j)}, \quad t \in \{1, 2, ..., T\}.
\end{equation}
Substituting \eqref{eq2} into $f_i(\cdot)$ and $\Phi_i(\cdot)$, we have
\begin{equation}
	\begin{aligned}
		f_i(\mathbf x_i(t), \mathbf u_i(t)) &:= \tilde{f}_{i,t}(\mathbf U_i),\\
		\Phi_i(\mathbf x_i(T)) &:= \tilde{\Phi}_i(\mathbf U_i),
	\end{aligned}
\end{equation}
where $\tilde{f}_{i,t}(\cdot)$ and $\tilde{\Phi}_i(\cdot)$ are concave functions (the composition of a concave function and an affine function is concave). For $i \in \mathcal{N}$, define $\mathbb{U}_i =\{ \mathbf U_i| \underline{\mathbf u}_i\leq \mathbf u_i(t) \leq \overline{\mathbf u}_i;  \underline{\mathbf x}_i \leq \mathbf x_i(t) \leq \overline{\mathbf x}_i, \text{for } t \in \mathcal{T}\}$, which is a polyhedral set. For any $(\mathbf U, \mathbf P)$ such that $\mathbf U_i \in \mathbb{U}_i$, the Lagrangian function of \eqref{eq4} is defined as
\begin{equation}\label{eq5}
	\begin{aligned}
	L(\mathbf U, \mathbf P, &\boldsymbol{\alpha}, \underline{\boldsymbol\Xi}, \overline{\boldsymbol\Xi}, \boldsymbol\Psi) = - \sum_{i=1}^{n}\Big(\sum_{t=0}^{T-1} \tilde{f}_{i,t}( \mathbf U_i)+\tilde{\Phi}_i(\mathbf U_i) \Big) 
	\\&+ \sum_{t=0}^{T-1}\alpha(t) \Big(-\sum_{i=1}^n p_i(t) \Big) \Delta
	\\&+\sum_{t=0}^{T-1}\sum_{i=1}^{n}\underline\xi_{i}(t) \Big( \underline v_i - \sum_{k =1}^n R_{ik}  p_k(t) \Big) \Delta
	\\&+ \sum_{t=0}^{T-1}\sum_{i=1}^{n}\overline\xi_{i}(t) \Big( \sum_{k =1}^n R_{ik}  p_k(t)-\overline v_i \Big) \Delta
	\\&+ \sum_{t=0}^{T-1}\sum_{i=1}^{n} \psi_{i}(t) \Big( p_i(t) \Delta+h_i(\mathbf u_i(t))-a_i(t)\Big),
\end{aligned}
\end{equation}
where $\underline\xi_{i}(t), \overline\xi_{i}(t), \psi_{i}(t) \geq 0$, $\boldsymbol \alpha =(\alpha(0), \dots, \alpha({T-1}))^\top$, $ \underline{\boldsymbol\Xi}_i =(\underline\xi_{i}(0), \dots, \underline\xi_{i}(T-1))^\top$, and $ \underline{\boldsymbol\Xi} = ( \underline{\boldsymbol\Xi}_1^\top, \dots,  \underline{\boldsymbol\Xi}_n^\top)^\top$. We define $\overline{\boldsymbol\Xi}_i, \overline{\boldsymbol\Xi}$, and $\boldsymbol\Psi_i, \boldsymbol\Psi$ in a similar way. 
The Lagrangian  \eqref{eq5}  can be written as 
\begin{equation}\label{eq6}
	L(\mathbf U, \mathbf P, \boldsymbol{\alpha}, \underline{\boldsymbol\Xi}, \overline{\boldsymbol\Xi}, \boldsymbol\Psi) = \sum_{i=1}^{n} L_i(\mathbf U_i, \mathbf p_i, \boldsymbol{\alpha}, \underline{\boldsymbol\Xi}, \overline{\boldsymbol\Xi}, \boldsymbol\Psi_i),
\end{equation}
where
\begin{equation}
	\begin{aligned}
	L_i &(\mathbf U_i, \mathbf p_i, \boldsymbol{\alpha}, \underline{\boldsymbol\Xi}, \overline{\boldsymbol\Xi}, \boldsymbol\Psi_i)
	=- \sum_{t=0}^{T-1} \Big(\tilde{f}_{i,t}(\mathbf U_i)+\alpha(t)  p_i(t) \Delta
	 \\ &+ \sum_{k=1}^{n} \underline\xi_{k}(t)  R_{ki}  p_i(t) \Delta - \sum_{k=1}^{n} \overline\xi_{k}(t)  R_{ki}  p_i(t) \Delta \\&-\underline\xi_{i}(t) \underline v_i \Delta  +\overline\xi_{i}(t)  \overline v_i \Delta \Big) - \tilde{\Phi}_i(\mathbf U_i) 
	\\&+  \sum_{t=0}^{T-1} \psi_{i}(t) \Big( p_i(t)\Delta+h_i(\mathbf u_i(t))-a_i(t)\Big).
\end{aligned}
\end{equation}
Considering  \eqref{eq4}, let $(\mathbf U^\star, \mathbf P^\star)$ be an optimal primal solution, and  $(-\boldsymbol{\alpha}^\ast \Delta, \underline{\boldsymbol \Xi}^\ast \Delta, \overline{\boldsymbol\Xi}^\ast \Delta, \boldsymbol{\Psi}^\ast)$ be an optimal dual solution. Since Slater's condition holds, strong duality implies \cite{boyd2004convex}
\begin{equation}\label{eq_primal}
	(\mathbf U^\star, \mathbf P^\star) \in \arg \min_{\boldsymbol U, \boldsymbol P} L(\mathbf U, \mathbf P, \boldsymbol{\alpha}^\ast, \underline{\boldsymbol\Xi}^\ast, \overline{\boldsymbol\Xi}^\ast, \boldsymbol\Psi^\ast),
\end{equation}
for $ \mathbf U_i \in \mathbb{U}_i$. Denote
\begin{equation}
	\begin{aligned}
		\tilde L_i &(\mathbf U_i, \mathbf p_i, \boldsymbol{\alpha}, \underline{\boldsymbol\Xi}, \overline{\boldsymbol\Xi}, \boldsymbol\Psi_i) =  - \sum_{t=0}^{T-1} \Big(\tilde{f}_{i,t}(\mathbf U_i)+\alpha(t)  p_i(t) \Delta
		\\ &+ \sum_{k=1}^{n} \underline\xi_{k}(t)  R_{ki}  p_i(t) \Delta - \sum_{k=1}^{n} \overline\xi_{k}(t)  R_{ki}  p_i(t) \Delta \Big)- \tilde{\Phi}_i(\mathbf U_i) 
		\\&+  \sum_{t=0}^{T-1} \psi_{i}(t) \Big( p_i(t)\Delta+h_i(\mathbf u_i(t))-a_i(t)\Big).
	\end{aligned}
\end{equation}
Considering \eqref{eq6} and \eqref{eq_primal}, we obtain
\begin{equation}
	(\boldsymbol U_i^\star,  \boldsymbol p_i^\star) \in \arg \min_{\boldsymbol U_i, \boldsymbol p_i} L_i( \boldsymbol U_i,  \boldsymbol p_i, \boldsymbol \alpha^\ast, \underline{\boldsymbol \Xi}^\ast, \overline{\boldsymbol \Xi}^\ast, \boldsymbol{\Psi}_i^\ast),
\end{equation}
and, therefore, 
\begin{equation}\label{eq8}
		(\boldsymbol U_i^\star,  \boldsymbol p_i^\star) \in \arg \min_{\boldsymbol U_i, \boldsymbol p_i}  \tilde L_i ( \boldsymbol U_i,  \boldsymbol p_i, \boldsymbol \alpha^\ast, \underline{\boldsymbol \Xi}^\ast, \overline{\boldsymbol \Xi}^\ast, \boldsymbol{\Psi}_i^\ast),
\end{equation}
for $ \mathbf U_i \in \mathbb{U}_i$. Besides, since there holds
\begin{multline}
	(\boldsymbol{\alpha}^\ast, \underline{\boldsymbol \Xi}^\ast, \overline{\boldsymbol\Xi}^\ast, \boldsymbol{\Psi}^\ast) \in \arg \max_{\boldsymbol \alpha, \underline{\boldsymbol \Xi}, \overline{\boldsymbol \Xi}, \boldsymbol{\Psi}} \min_{\mathbf U, \mathbf P} 	L(\mathbf U, \mathbf P, \boldsymbol{\alpha}, \underline{\boldsymbol\Xi}, \overline{\boldsymbol\Xi}, \boldsymbol\Psi) 
	\\= \arg \max_{\boldsymbol \alpha, \underline{\boldsymbol \Xi}, \overline{\boldsymbol \Xi}, \boldsymbol{\Psi}} \min_{\mathbf U, \mathbf P} 	\sum_{i=1}^{n} L_i(\mathbf U_i, \mathbf p_i, \boldsymbol{\alpha}, \underline{\boldsymbol\Xi}, \overline{\boldsymbol\Xi}, \boldsymbol\Psi_i),
\end{multline}
we obtain 
\begin{equation}
	\boldsymbol{\Psi}_i^\ast \in \arg \max_{\boldsymbol \Psi_i} \min_{\mathbf U_i, \mathbf p_i} L_i(\mathbf U_i, \mathbf p_i, \boldsymbol{\alpha}^\ast, \underline{\boldsymbol\Xi}^\ast, \overline{\boldsymbol\Xi}^\ast, \boldsymbol\Psi_i),
\end{equation}
and, therefore, 
\begin{equation}\label{eq7}
	\boldsymbol{\Psi}_i^\ast \in \arg \max_{\boldsymbol \Psi_i} \min_{\mathbf U_i, \mathbf p_i} \tilde L_i(\mathbf U_i, \mathbf p_i, \boldsymbol{\alpha}^\ast, \underline{\boldsymbol\Xi}^\ast, \overline{\boldsymbol\Xi}^\ast, \boldsymbol\Psi_i),
\end{equation}
where the primal and dual variables are considered to be in their respective domains. Additionally, the function $\tilde L_i(\mathbf U_i, \mathbf p_i, \boldsymbol{\alpha}^\ast, \underline{\boldsymbol\Xi}^\ast, \overline{\boldsymbol\Xi}^\ast, \boldsymbol\Psi_i)$ is the Lagrangian  of \eqref{opt_LTD_1} if we define
\begin{equation}
	\lambda_i^\ast(t)=\alpha^\ast(t)+\sum_{k=1}^{n}\underline\xi_k^\ast(t) R_{ki} - \sum_{k=1}^{n}\overline\xi_k^\ast (t) R_{ki}, \quad t \in \mathcal{T}.
\end{equation}
Consequently,  according to \eqref{eq7}, $\mathbf \Psi_i^\ast$ is an optimal dual solution of \eqref{opt_LTD_1}.  Besides, since Slater's condition holds, strong duality implies that according to  \eqref{eq8}, $(\mathbf U^\star, \mathbf P^\star)$ is an optimal primal solution of \eqref{opt_LTD_1}  satisfying \eqref{constraint_balancing}--\eqref{constraint_voltage}. 
Therefore, $(\boldsymbol{\Lambda}^\ast, \mathbf U^\star, \mathbf P^\star)$ is a competitive equilibrium.

(ii) The proof can be obtained by reversing the proof of part (i).
\end{proof}

\begin{proposition}\label{prop_1}
	The locational prices corresponding to a competitive equilibrium satisfy $\lambda_i^\ast(t) \geq 0$ for $i \in \mathcal{N}$, $t \in \mathcal{T}$.
\end{proposition}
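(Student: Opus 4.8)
The plan is to argue by contradiction, exploiting the fact that each prosumer's subproblem becomes unbounded if its price is negative. Suppose, for contradiction, that $\lambda_i^\ast(t_0)<0$ for some $i\in\mathcal{N}$ and $t_0\in\mathcal{T}$. I will show that problem \eqref{opt_LTD_1} then admits no maximizer, which contradicts condition (i) of Definition \ref{definition1}.

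First I would inspect \eqref{opt_LTD_1} and observe that the scalar variable $p_i(t_0)$ is constrained only from above, through $p_i(t_0)\Delta\leq a_i(t_0)-h_i(\mathbf u_i(t_0))$, and appears in no other constraint: the state recursion, the state bounds, and the input bounds all involve $\mathbf U_i$ alone. In the objective, $p_i(t_0)$ enters only via the linear term $\lambda_i^\ast(t_0)p_i(t_0)\Delta$. Since $(\boldsymbol\Lambda^\ast,\mathbf U^\ast,\mathbf P^\ast)$ is a competitive equilibrium, the pair $(\mathbf U_i^\ast,\mathbf p_i^\ast)$ is feasible for \eqref{opt_LTD_1}; starting from this point, I would hold $\mathbf U_i^\ast$ and all $p_i(t)$ with $t\neq t_0$ fixed and send $p_i(t_0)\to-\infty$. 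Feasibility is preserved, while the objective tends to $+\infty$ because $\lambda_i^\ast(t_0)\Delta<0$. Hence \eqref{opt_LTD_1} is unbounded above and has no optimal solution, contradicting that $(\mathbf U_i^\ast,\mathbf p_i^\ast)$ solves it; therefore $\lambda_i^\ast(t)\geq 0$ for all $i\in\mathcal{N}$, $t\in\mathcal{T}$. An equivalent, duality-based route — consistent with the Lagrangian decomposition in the proof of Theorem \ref{theorem1} — is to note that $p_i(t)$ is a free variable in \eqref{opt_LTD_1}, so stationarity of its Lagrangian $\tilde L_i$ in $p_i(t)$ forces the multiplier $\psi_i^\ast(t)$ of the constraint $p_i(t)\Delta+h_i(\mathbf u_i(t))-a_i(t)\leq 0$ to equal $\alpha^\ast(t)+\sum_{k=1}^{n}\bigl[\underline\xi_k^\ast(t)-\overline\xi_k^\ast(t)\bigr]R_{ki}=\lambda_i^\ast(t)$; dual feasibility $\psi_i^\ast(t)\geq 0$ then gives the claim at once.

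The only real subtlety — and the step I would check with care — is confirming that $p_i(t_0)$ is genuinely unbounded below and decoupled from the remaining constraints and from the rest of the objective, so that the perturbation $p_i(t_0)\to-\infty$ is admissible. A direct look at \eqref{opt_LTD_1} settles this: the sole coupling of $p_i(t_0)$ to the dynamic variables is the one-sided inequality $p_i(t_0)\Delta\leq a_i(t_0)-h_i(\mathbf u_i(t_0))$, which only bounds $p_i(t_0)$ from above. Everything else is routine.
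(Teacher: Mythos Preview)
Your argument is correct. The paper itself does not give a proof here; it simply writes ``The proof is straightforward following \cite[Lemma 1]{salehi2022social},'' deferring to an external reference. Your contradiction argument---observing that $p_i(t_0)$ is bounded only from above in \eqref{opt_LTD_1} and enters the objective linearly, so a negative $\lambda_i^\ast(t_0)$ would render the prosumer's problem unbounded---is exactly the elementary mechanism that underlies such a lemma, and your duality variant (stationarity in $p_i(t)$ forces $\psi_i^\ast(t)=\lambda_i^\ast(t)\geq 0$) is the standard KKT formulation of the same fact. In short, you have supplied the explicit proof the paper omits; nothing further is needed.
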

\begin{proof}
The proof is straightforward following  \cite[Lemma 1]{salehi2022social}.
\end{proof}

\begin{remark}
	Theorem \ref{theorem1} is an extension of  \cite[Theorem 4]{chen2022competitive} (considering dynamic loads without voltage constraints) and  \cite[Theorem 1]{li2015market} (considering static loads with voltage constraints). The presented proof is related to the previous works, however, it poses new challenges due to the new framework.
\end{remark}
\subsection{Nash Equilibrium}
Arrow and Debreu \cite{arrow1954existence} were the first to show that a competitive equilibrium can be obtained as a solution to a generalized game in which the decision of one player affects both the objective function and the constraint set of other players. In this section, however, we show that the competitive equilibrium in Definition \ref{definition1} can be obtained as a Nash equilibrium of a standard game with $n+1$ players. 
The first $n$ players are prosumers who maximize their payoff subject to the local constraints. Given a price $\lambda_i(t)=\alpha(t)+\sum_{k=1}^{n} \big[\underline\xi_k(t)-\overline\xi_k (t) \big] R_{ki}$ for $t \in \mathcal{T}$, each prosumer $i \in \mathcal{N}$ obtains $(\mathbf U_i, \mathbf p_i)$ as a solution to
\begin{equation}\label{game1}
\begin{aligned}
\max_{{\mathbf U_i}, \mathbf p_i} \quad &  \sum_{t=0}^{T-1} \Big(f_i(\mathbf x_i(t), \mathbf u_i(t)) + \lambda_i(t) p_i(t)\Delta \Big)+{\Phi_i(\mathbf x_i(T)}) \\
{\rm s.t.} 
\quad &  \mathbf x_i(t+1)= \mathbf A_i \mathbf x_i(t)+ \mathbf B_i \mathbf u_i(t),  \\
\quad &  p_i(t)\Delta \leq a_i (t)- h_i(\mathbf u_i(t)),  \\
\quad & \underline{\mathbf x}_i \leq \mathbf x_i(t) \leq \overline{\mathbf x}_i, \\
\quad & \underline{\mathbf u}_i\leq \mathbf u_i(t) \leq \overline{\mathbf u}_i, \quad  t \in \mathcal{T}.\\
\end{aligned}
\end{equation}
The last player is an aggregator who takes the role of a price player and regulates the price such that the power balance constraint and the voltage constraints are satisfied. Given $p_i(t)$ for $i \in \mathcal{N}$ and $t \in \mathcal{T}$, the price player obtains   $(\alpha(t)$, $\underline\xi_i(t),\overline\xi_i(t))$ as a solution to
\begin{equation}\label{game2}
\begin{aligned}
\max_{{\boldsymbol \alpha}, \underline{\boldsymbol \Xi}, \overline{\boldsymbol \Xi}} \quad &  \sum_{i=1}^n \sum_{t=0}^{T-1} \Bigg[-\alpha(t)p_i(t)  
+\underline\xi_{i}(t) \Big( \underline v_i - \sum_{k =1}^n R_{ik}  p_k(t) \Big) 
\\ &+\overline\xi_{i}(t) \Big( \sum_{k =1}^n R_{ik}  p_k(t)-\overline v_i \Big)\Bigg]  \\
{\rm s.t.} 
\quad &  \underline\xi_{i}(t), \overline\xi_{i}(t) \geq 0,
\quad  i \in \mathcal{N}, \,\, t \in \mathcal{T}.
\end{aligned}
\end{equation}
The standard game in \eqref{game1}--\eqref{game2} stems from the generalized game introduced in \cite{arrow1954existence}, where the objective function of the price player indicates the ``law of supply and demand".
\begin{remark}
    It is straightforward to show that a Nash equilibrium $(\mathbf U^\ast, \mathbf P^\ast, \boldsymbol{\alpha}^\ast, \underline{\boldsymbol \Xi}^\ast, \overline{\boldsymbol\Xi}^\ast)$ of the game \eqref{game1}--\eqref{game2} is equivalent to a competitive equilibrium and a social welfare maximization solution.
\end{remark}
\subsection{Strictly Implementable Solutions}
In general, the locational prices are different at each node in the microgrid. In this section, we investigate a special case in which all prosumers have the same locational prices.
\begin{definition}
	Let $(\mathbf U^\star, \mathbf P^\star)$ be a solution to 
	\begin{equation}\label{eq12}
		\begin{aligned}
			\max_{{\mathbf U}, \mathbf P} \quad &  \sum_{i=1}^{n}\Big(\sum_{t=0}^{T-1} f_i(\mathbf x_i(t), \mathbf u_i(t))+ {\Phi_i(\mathbf x_i(T))} \Big)\\
			{\rm s.t.} 
			\quad &  \mathbf x_i(t+1)= \mathbf A_i \mathbf x_i(t)+ \mathbf B_i \mathbf u_i(t),  \\
			\quad &  p_i(t)\Delta \leq a_i (t)- h_i(\mathbf u_i(t)),  \\
			\quad &	\sum_{i=1}^n p_i(t) = 0,\\
			\quad & \underline{\mathbf x}_i \leq \mathbf x_i(t) \leq \overline{\mathbf x}_i, \\
			\quad & \underline{\mathbf u}_i\leq \mathbf u_i(t) \leq \overline{\mathbf u}_i,   \quad  i \in \mathcal{N}, \, t \in \mathcal{T}.
		\end{aligned}
	\end{equation}
	We say $(\mathbf U^\star, \mathbf P^\star)$ is strictly implementable to a microgrid  if it satisfies 
	$$\underline {v}_i < \sum_{k =1}^n R_{ik}  p_k^\star(t)<\overline { v}_i,    \,\,\,  i \in \mathcal{N}, \, t \in \mathcal{T}.$$
\end{definition}

\medskip
\begin{proposition}\label{prop_2}
	If there exists a strictly implementable solution to \eqref{eq12}, then all prosumers have the same locational prices associated with a competitive equilibrium; i.e., $\lambda_i^\ast(t)=\lambda^\ast(t)$ for $i \in \mathcal{N}$, $t \in \mathcal{T}$.
\end{proposition}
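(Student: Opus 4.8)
The plan is to exploit the fact that \eqref{eq12} is precisely the social welfare problem \eqref{eq4} with the voltage constraints \eqref{eq_voltage} deleted, and then to read off the conclusion from the complementary slackness condition \eqref{constraint_voltage} that is built into the definition of a competitive equilibrium. First I would show that a strictly implementable solution $(\mathbf U^\star, \mathbf P^\star)$ of \eqref{eq12} is in fact a social welfare maximizer, i.e.\ an optimizer of \eqref{eq4}. Since \eqref{eq12} has the same objective as \eqref{eq4} but fewer constraints, it is a relaxation of \eqref{eq4}, so the optimal value of \eqref{eq12} is at least that of \eqref{eq4}. On the other hand, strict implementability gives $\underline v_i < \sum_{k=1}^n R_{ik} p_k^\star(t) < \overline v_i$, so $(\mathbf U^\star, \mathbf P^\star)$ is feasible for \eqref{eq4}; being an optimizer of \eqref{eq12} it attains the optimal value of \eqref{eq12}, which is therefore also the optimal value of \eqref{eq4}. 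Hence $(\mathbf U^\star, \mathbf P^\star)$ maximizes the social welfare.

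Next I would invoke Theorem \ref{theorem1}(i): under Assumption \ref{assumption1} there exists $\boldsymbol\Lambda^\ast \in \mathbb{R}^{nT}$ such that $(\boldsymbol\Lambda^\ast, \mathbf U^\star, \mathbf P^\star)$ is a competitive equilibrium, so all conditions of Definition \ref{definition1} hold for it. In particular \eqref{constraint_price} reads $\lambda_i^\ast(t) = \alpha^\ast(t) + \sum_{k=1}^n [\underline\xi_k^\ast(t) - \overline\xi_k^\ast(t)] R_{ki}$, and the conditions \eqref{constraint_voltage} are satisfied. Because $(\mathbf U^\star, \mathbf P^\star)$ is strictly implementable, both $\underline v_i - \sum_{k=1}^n R_{ik} p_k^\star(t)$ and $\sum_{k=1}^n R_{ik} p_k^\star(t) - \overline v_i$ are strictly negative, hence nonzero; therefore \eqref{constraint_voltage} forces $\underline\xi_i^\ast(t) = \overline\xi_i^\ast(t) = 0$ for all $i \in \mathcal N$ and $t \in \mathcal T$. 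Substituting into \eqref{constraint_price} yields $\lambda_i^\ast(t) = \alpha^\ast(t)$ independently of $i$, and setting $\lambda^\ast(t) := \alpha^\ast(t)$ finishes the argument.

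The only step requiring genuine care is the first one: confirming that the strict-feasibility hypothesis upgrades an optimizer of the relaxed problem \eqref{eq12} to an optimizer of the full problem \eqref{eq4}, which is exactly what makes Theorem \ref{theorem1} applicable; after that, the result is immediate from the complementary slackness relations \eqref{constraint_voltage}. I would also remark that, since in a convex program satisfying Slater's condition complementary slackness holds between any primal optimal and any dual optimal point, the voltage prices $\underline\xi_i^\ast(t), \overline\xi_i^\ast(t)$ must vanish at \emph{every} competitive equilibrium whose allocation is a strictly implementable solution, so the equality of locational prices is not an artifact of the particular equilibrium constructed via Theorem \ref{theorem1}.
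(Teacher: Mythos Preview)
Your proof is correct and follows essentially the same route as the paper: identify the price decomposition $\lambda_i^\ast(t)=\alpha^\ast(t)+\sum_k[\underline\xi_k^\ast(t)-\overline\xi_k^\ast(t)]R_{ki}$ coming from Theorem~\ref{theorem1}(i), then use complementary slackness on the strictly inactive voltage constraints to force $\underline\xi_k^\ast=\overline\xi_k^\ast=0$. Your version is in fact more complete than the paper's, since you spell out the relaxation argument showing that a strictly implementable optimizer of \eqref{eq12} is automatically an optimizer of \eqref{eq4}; the paper's proof takes this link for granted.
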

\begin{proof}
	According to the proof of part (i) in Theorem \ref{theorem1}, we have
	\begin{equation*}
		\lambda_i^\ast(t)=\alpha^\ast(t)+\sum_{k=1}^{n}\underline\xi_k^\ast(t) R_{ki} - \sum_{k=1}^{n}\overline\xi_k^\ast (t) R_{ki}, \,\,\,  i \in \mathcal{N}, \, t \in \mathcal{T},
	\end{equation*}
	where $\underline\xi_k^\ast(t) \Delta$ and $\overline\xi_k^\ast (t) \Delta$ are the Lagrange multipliers associated with the voltage constraints. Since the solution is strictly implementable, the voltage constraints hold with strict inequality leading to $\underline\xi_k^\ast(t)=\overline\xi_k^\ast (t)=0$ for $k \in \mathcal{N}$, $t \in \mathcal{T}$. Therefore, $\lambda_i^\ast(t)=\alpha^\ast(t)$ for $i \in \mathcal{N}$, $t \in \mathcal{T}$.
\end{proof}

\subsection{Decaying Prices}
The energy price depends on  the available resources and the network topology. In the following, we show that under some assumptions on these two factors, the optimal locational prices decay to  zero after a while. To consider feasible initial conditions, we first define invariant sets.

Suppose $a_i(t)$ is constant over the horizon, i.e., $a_i(t)=a_i$. We denote by $\mathscr{X}=\{ \mathbf x(t)|   \underline{\mathbf x}_i \leq \mathbf x_i(t) \leq \overline{\mathbf x}_i, \text{ for } i \in \mathcal{N}\}$ and $\mathscr{U}=\{ \mathbf u(t)|   \underline{\mathbf u}_i \leq \mathbf u_i(t) \leq \overline{\mathbf u}_i; p_i(t)\Delta \leq a_i-h_i(\mathbf u_i(t)); \sum_{i=1}^{n}p_i(t)=0; \underline {v}_i \leq \sum_{k =1}^n R_{ik}  p_k(t) \leq \overline { v}_i; p_i(t)\in \mathbb{R}, \text{ for } i \in \mathcal{N}\}$  the sets of constraints on the states and control inputs, associated with \eqref{eq4}, respectively. 
\begin{definition}[as in \cite{gutman1987algorithm}]
    Consider the dynamics in \eqref{eq_state} for $i \in \mathcal{N}$. Suppose $(0,0) \in \mathscr{X} \times \mathscr{U}$. A set $ X \subseteq \mathscr{X}$ is called $\mathscr{U}$-invariant w.r.t $\mathscr{X}$ if  for all $\mathbf x(0) \in X$, there exists a control sequence $\{\mathbf u(t)\}_{t=0}^\infty \in \mathscr{U}$ under which $\mathbf x(t) \in X$ for all $t$, and $\lim_{t\rightarrow \infty} \mathbf x(t)=0$. 
\end{definition}
The \textit{maximal $\mathscr{U}$-invariant set} $X_{\text{max}} \subseteq \mathscr{X}$ is the union of all $\mathscr{U}$-invariant sets w.r.t $\mathscr{X}$, which can be approximated by the algorithm presented in \cite{gutman1987algorithm}.

\begin{assumption}\label{assumption2}
	(i) For  $i \in \mathcal{N}$, the functions $f_i(\cdot)$ and $\Phi_i(\cdot)$ are  negative definite (ND) and concave; (ii) the function $h_i(\cdot)$ is convex; (iii) $f_i(\cdot)$, $\Phi_i(\cdot)$, and  $h_i(\cdot)$ go through the origin; (iv) $a_i(t)$ is constant over the horizon, i.e., $a_i(t)=a_i$; (v)  $D:=\sum_{i=1}^{n} a_i>0$; (vi) there holds $\underline{v}_i<\frac{1}{\Delta}\sum_{k=1}^{n}R_{ik}\Big(a_k - \frac{D}{n}\Big) <\overline{v}_i$.
\end{assumption}

\begin{theorem}\label{theorem2}
	Let Assumption \ref{assumption2} hold. If $\mathbf x(0) \in X_{\rm max}$ and $T$ is sufficiently large, then there exists a finite time $\bar T(\mathbf x(0))<T$ such that for $t\geq \bar{T}(\mathbf x(0))$ the locational prices associated with a competitive equilibrium satisfy $\lambda_i^\ast(t)=0$, $i \in \mathcal{N}$.
\end{theorem}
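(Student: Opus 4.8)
The plan is to construct one competitive equilibrium explicitly and then read off that its locational prices vanish after a finite time, exploiting the fact that the objective of \eqref{eq4} depends only on $\mathbf U$ and not on the injections $\mathbf P$. Since $f_i,\Phi_i$ are negative definite and pass through the origin, the all-zero trajectory is the unique maximizer of the social welfare objective, and since $\mathbf x(0)\in X_{\rm max}$ there is a feasible input sequence in $\mathscr U$ whose trajectory stays in $X_{\rm max}$ and converges to the origin. Hence for $T$ sufficiently large an optimal pair $(\mathbf U^\star,\mathbf P^\star)$ of \eqref{eq4} is, by a turnpike-type argument, driven near the origin and kept there until the terminal time (negative-definiteness of $\Phi_i$ rules out any terminal exit arc): there is $\bar T=\bar T(\mathbf x(0))<T$, independent of $T$ once $T$ is large, with $(\mathbf x_i^\star(t),\mathbf u_i^\star(t))$ arbitrarily small for all $i$ and $t\ge\bar T$. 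By continuity and $h_i(0)=0$, one may fix $\bar T$ so that $\sigma(t):=D-\sum_{k=1}^n h_k(\mathbf u_k^\star(t))>0$ and $|h_i(\mathbf u_i^\star(t))|$ is as small as needed for all $t\ge\bar T$.

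Next I choose the injections. As the objective of \eqref{eq4} is independent of $\mathbf P$, for the fixed optimal $\mathbf U^\star$ every $\mathbf P$ feasible given $\mathbf U^\star$ produces an optimal pair; for $t<\bar T$ take any such $\mathbf p^\star(t)$, and for $t\ge\bar T$ set
\[
p_i^\star(t)\,\Delta \;=\; a_i - h_i(\mathbf u_i^\star(t)) - \frac{\sigma(t)}{n},\qquad i\in\mathcal N .
\]
Then $\sum_i p_i^\star(t)=0$; the supply constraints $p_i^\star(t)\Delta\le a_i-h_i(\mathbf u_i^\star(t))$ hold strictly because $\sigma(t)>0$; and since $p_k^\star(t)\Delta\to a_k-\tfrac Dn$ as the controls shrink, Assumption \ref{assumption2}(vi) and continuity give $\underline v_i<\sum_k R_{ik}p_k^\star(t)<\overline v_i$ for all $i$, provided $\bar T$ is enlarged enough. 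Thus at $(\mathbf U^\star,\mathbf P^\star)$ the supply and voltage constraints of \eqref{eq4} are all slack for every $t\ge\bar T$.

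Finally I extract the prices. Apply Theorem \ref{theorem1}(i) to $(\mathbf U^\star,\mathbf P^\star)$: there is $\boldsymbol\Lambda^\ast$ making $(\boldsymbol\Lambda^\ast,\mathbf U^\star,\mathbf P^\star)$ a competitive equilibrium, with $\lambda_i^\ast(t)=\alpha^\ast(t)+\sum_k(\underline\xi_k^\ast(t)-\overline\xi_k^\ast(t))R_{ki}$ coming from an optimal dual $(\boldsymbol\alpha^\ast,\underline{\boldsymbol\Xi}^\ast,\overline{\boldsymbol\Xi}^\ast,\boldsymbol\Psi^\ast)$ of \eqref{eq4}. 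Complementary slackness (valid between the optimal primal and any optimal dual) together with the slackness established above forces $\psi_i^\ast(t)=\underline\xi_k^\ast(t)=\overline\xi_k^\ast(t)=0$ for $t\ge\bar T$. Since $p_i(t)$ is an unconstrained variable of \eqref{eq4}, stationarity of the Lagrangian \eqref{eq5} in $p_i(t)$ reads $-\alpha^\ast(t)\Delta-\sum_k\underline\xi_k^\ast(t)R_{ik}\Delta+\sum_k\overline\xi_k^\ast(t)R_{ik}\Delta+\psi_i^\ast(t)\Delta=0$; for $t\ge\bar T$ this gives $\alpha^\ast(t)=0$ and hence $\lambda_i^\ast(t)=0$ for all $i$, which is the claim.

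The only substantive step is the first one: turning ``$\mathbf x(0)\in X_{\rm max}$ and $T$ large'' into the quantitative statement that the optimal state and input fall below a prescribed threshold after a horizon-independent time $\bar T(\mathbf x(0))$. This is a turnpike / receding-horizon property; it leans on the existence of a feasible trajectory to the origin, on the negative-definiteness of $f_i$ and $\Phi_i$ (so the origin is the unique optimum of every stage and of the terminal term, hence no terminal transient), and on the structure of the maximal invariant set from \cite{gutman1987algorithm}. Everything afterward is bookkeeping with complementary slackness and with the same Lagrangian stationarity already used in the proof of Theorem \ref{theorem1}.
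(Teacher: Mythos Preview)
Your argument is correct and shares the paper's core construction: the same turnpike reasoning to make the optimal controls small after a finite $\bar T$, and the identical choice of injections $p_i^\star(t)\Delta=a_i-h_i(\mathbf u_i^\star(t))-\sigma(t)/n$ rendering the supply and voltage constraints strictly slack for $t\ge\bar T$. The two proofs differ only in how the zero prices are read off. You invoke Theorem~\ref{theorem1}(i) to obtain a competitive equilibrium from an optimal dual of~\eqref{eq4}, and then complementary slackness together with stationarity of the Lagrangian~\eqref{eq5} in the free variable $p_i(t)$ force $\psi_i^\ast(t)=\underline\xi_k^\ast(t)=\overline\xi_k^\ast(t)=\alpha^\ast(t)=0$ for $t\ge\bar T$. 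The paper instead applies the principle of optimality at $\bar T$, argues that the tail problem decouples across prosumers once the coupling constraints are inactive, and verifies Definition~\ref{definition1} directly with $\lambda_i^\ast(t)=0$ on the tail. Your route is a bit more systematic and reuses machinery already established; the paper's is self-contained but needs the decoupling/optimality step. One small caveat: by going through Theorem~\ref{theorem1}(i) you implicitly import Assumption~\ref{assumption1} (Slater), which Theorem~\ref{theorem2} does not list; the paper's direct verification avoids this, though in the present setting it is harmless.
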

\begin{proof}
	Consider the social welfare maximization problem \eqref{eq4}. Concavity of $f_i(\cdot)$ and $\Phi_i(\cdot)$,  and convexity of $h_i(\cdot)$ result in their continuity. Since $\mathbf x(0) \in X_{\rm max}$, and $f_i(\mathbf x_i(t), \mathbf u_i(t))$ and $\Phi_i(\mathbf x_i(T))$ are negative definite concave functions passing through the origin, if $T \rightarrow \infty$, then $\lim_{t\rightarrow \infty} \mathbf x_i(t)=0$ and $\lim_{t\rightarrow \infty} \mathbf u_i^\ast(t)=0$ for $i \in \mathcal{N}$. Besides, associated with any $\epsilon>0$, there exists a finite time $\bar T(\mathbf x(0))<T$ such that $\|h_i(\mathbf u_i^\ast(t))\|<\epsilon$ and $\sum_{i=1}^{n}h_i(\mathbf u_i^\ast(t))<\sum_{i=1}^{n} a_i$ for $t \geq \bar T(\mathbf x(0))$, $i \in \mathcal{N}$. In the remainder, by $\bar T$ we mean $\bar T(\mathbf x(0))$. 
	Selecting $p_i^\ast(t)$ as
	\begin{equation}\label{eq_e}
		p_i^\ast(t) \Delta = a_i - h_i(\mathbf u_i^\ast(t))  + \frac{1}{n} \left(\sum_{i=1}^{n}h_i(\mathbf u_i^\ast(t))  - \sum_{i=1}^{n}a_i \right), 
	\end{equation}
   the constraints  $p_i^\ast(t)\Delta \leq a_i -h_i(\mathbf u_i^\ast(t))$ and $\sum_{i=1}^{n}p_i^\ast(t)=0$ are satisfied for $t\geq\bar T$. Additionally, if $\epsilon$ is selected to be sufficiently small, there holds 
	$\underline {v}_i< \sum_{k =1}^n R_{ik}  p_k^\ast(t) < \overline { v}_i$,
	for $t \geq \bar T$, $i \in \mathcal{N}$.
	Implementing the first $\bar T$ optimal control inputs $\mathbf u_i^\ast(0), \dots, \mathbf u_i^\ast(\bar T-1)$ on the system dynamics, we arrive at $\mathbf x_i^\ast(\bar T)$ for $i \in \mathcal{N}$. Denote $\mathbf U_{\bar T}^\ast:=(\mathbf u^{\ast \top}(\bar T), \dots, \mathbf u^{\ast \top}(T-1))^\top$.  The principle of optimality implies that $\mathbf U_{\bar T}^\ast$ solves the following constrained optimal control problem 
	\begin{equation}\label{eq9}
		\begin{aligned}
			\max_{{\mathbf U_{\bar T}}} \quad &  \sum_{i=1}^{n}\Big(\sum_{t=\bar T}^{T-1} f_i(\mathbf x_i(t), \mathbf u_i(t))+ {\Phi_i(\mathbf x_i(T))} \Big)\\
			{\rm s.t.} 
			\quad &  \mathbf x_i(t+1)= \mathbf A_i \mathbf x_i(t)+ \mathbf B_i \mathbf u_i(t), \,\,\, \mathbf x_i(\bar T)=\mathbf x_i^\ast(\bar T),\\
			\quad & \underline{\mathbf x}_i \leq \mathbf x_i(t) \leq \overline{\mathbf x}_i, \\
			\quad & \underline{\mathbf u}_i\leq \mathbf u_i(t) \leq \overline{\mathbf u}_i,     \,\,\,  i \in \mathcal{N}, \, t \in \{\bar T, \dots, T-1\}.
		\end{aligned}
	\end{equation}
 As \eqref{eq9} is separable, $\mathbf U_{\bar T}^\ast$ is also a solution to 
	\begin{equation}\label{eq10}
		\begin{aligned}
			\max_{{\mathbf U_{i, \bar T}}} \quad &  \sum_{t=\bar T}^{T-1} f_i(\mathbf x_i(t), \mathbf u_i(t))+ {\Phi_i(\mathbf x_i(T))} \\
			{\rm s.t.} 
			\quad &  \mathbf x_i(t+1)= \mathbf A_i \mathbf x_i(t)+ \mathbf B_i \mathbf u_i(t),  \,\,\, \mathbf x_i(\bar T)=\mathbf x_i^\ast(\bar T),\\
			\quad & \underline{\mathbf x}_i \leq \mathbf x_i(t) \leq \overline{\mathbf x}_i, \\
			\quad & \underline{\mathbf u}_i\leq \mathbf u_i(t) \leq \overline{\mathbf u}_i,     \, \,\, t \in \{\bar T, \dots, T-1\},
		\end{aligned}
	\end{equation}
    where $\mathbf U_{i, \bar T}:= (\mathbf u_i^{ \top}(\bar T), \dots, \mathbf u_i^{ \top}(T-1))^\top$ for $i \in \mathcal{N}$.
	Considering \eqref{eq_e}, we construct $\mathbf P_{\bar T}^\ast=(\mathbf p^{\ast \top}(\bar T), \dots, \mathbf p^{\ast \top}(T-1))^\top$. Following \eqref{eq10}, it can be concluded that $(\mathbf U_{\bar T}^\ast, \mathbf P_{\bar T}^\ast)$ solves the optimization problem in \eqref{opt_LTD_1} starting from $\bar T$ and $\mathbf x_i(\bar T)=\mathbf x_i^\ast(\bar T)$ under $\lambda_i^\ast(t)=0$  such that  \eqref{constraint_balancing}--\eqref{constraint_voltage} are satisfied with the choice of $\alpha^\ast(t)=\underline \xi_{i}^\ast(t) = \overline \xi_{i}^\ast(t)=0$. Therefore, $(\mathbf U_{\bar T}^\ast, \mathbf P_{\bar T}^\ast)$ forms a competitive equilibrium with $\lambda_i^\ast(t)=0$ for $t \geq  \bar T$, $i \in \mathcal{N}$.
\end{proof}

\begin{remark}
    The decaying behavior of price in Theorem \ref{theorem2} implies that with a suitable energy supply, a proper network topology, and good enough initial conditions the available energy will be more than the demand in the long run to stabilize the system. Such assumptions are restrictive in practice since the initial conditions are from a small set and the total supply is always positive. However, a similar case might happen in future energy markets with high penetration of renewables that have zero marginal costs,
    leading to low energy prices.
\end{remark}

\section{Simulation Results}\label{sec:examples}
We consider a radial microgrid as in Fig. \ref{fig0} with $n=9$ nodes and a reference node $0$, where $r_{ij}=0.5$ k$\Omega$ and $x_{ij}=0$ k$\Omega$ for $(i,j) \in \mathcal{E}$. 
\begin{figure}[!t]
	\centering
	\includegraphics[width=2.7 in]{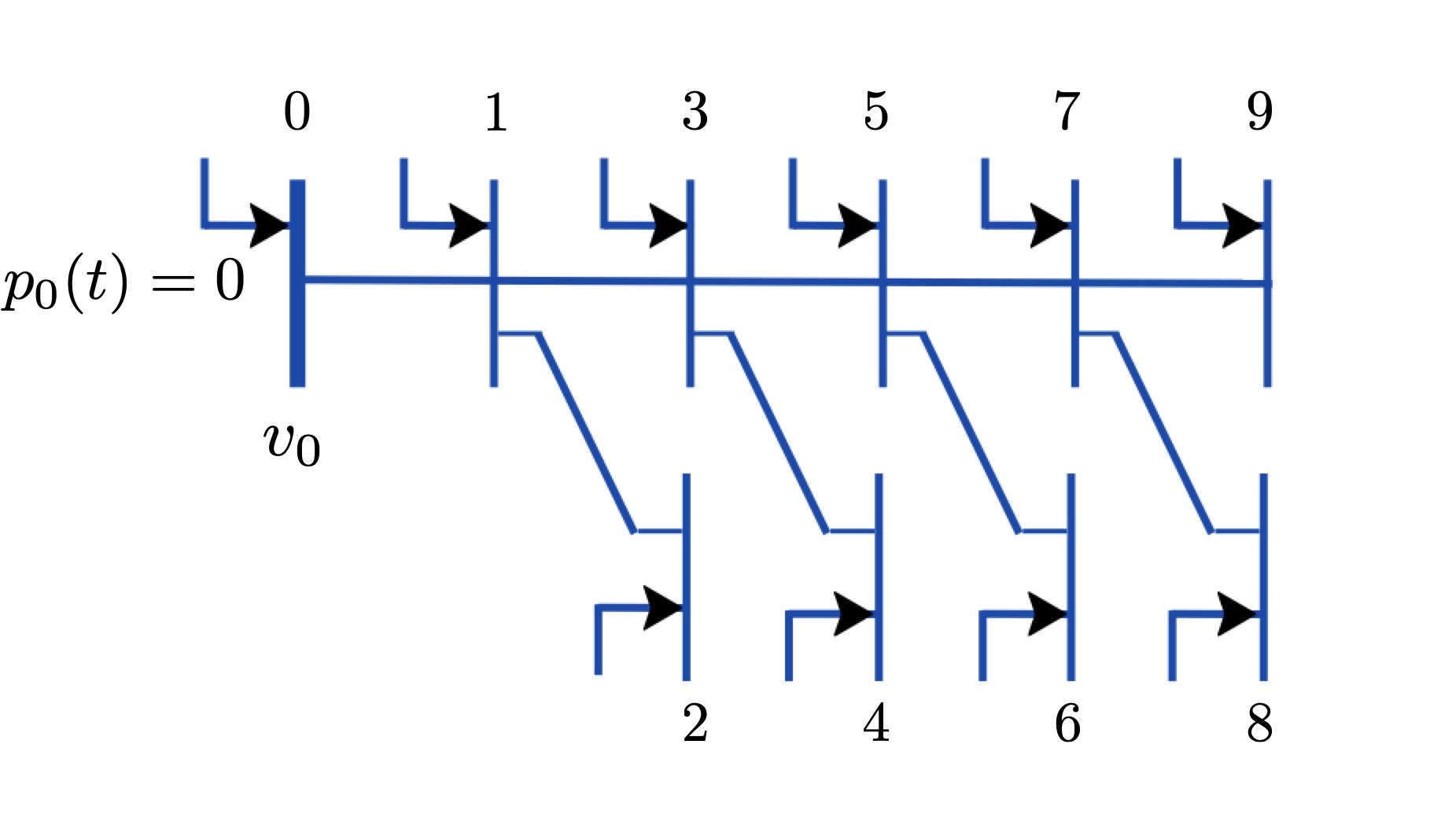}
	\caption{Diagram of the  microgrid in Examples 1 and 2.}
	\label{fig0}
\end{figure}
Let $v_0=12.35^2$ (kV)$^2$, $\overline v_i= (1.05^2-1)v_0$, and $\underline v_i = (0.95^2-1)v_0$ for $i \in \mathcal{N}$. 

\subsection{Real-World Application (EV Charging)}
\begin{example}
Consider the EV charging example  in Section \ref{sec:motivating_example}, where the dynamics are represented by \eqref{eq_EV}.
For  $i \in \mathcal{N}$, let $\eta_i=0.9$ and $\Delta=0.5$. Each EV has $C=30$  kWh battery capacity. The SoC and the charge/discharge rate are constrained by  $0.2C \leq x_i(t) \leq 0.85C$ and $-1.8 \leq u_i(t) \leq 1.8$, respectively. We study the network over a time horizon $T=100$.  Suppose all EVs arrive for charging/discharging at the beginning of the horizon and depart at the end of the horizon. Each prosumer $i \in \mathcal{N}$ tries to maximize its  payoff with the utility functions $f_i(x_i(t), u_i(t))=-u_i^2(t)$ and $\Phi_i(x_i(T))=-(x_i(T)-0.85C)^2$. 

Suppose EVs have  the initial SoC $\mathbf x(0)=(0.23, 0.21, 0.2, 0.24, 0.27, 0.25, 0.3, 0.25, 0.28)^\top C$. The net supply $a_i(t)$ of each prosumer is sinusoidal and depicted in Fig. \ref{fig1}. 
\begin{figure}[!t]
		\centering
		\includegraphics[width=3 in]{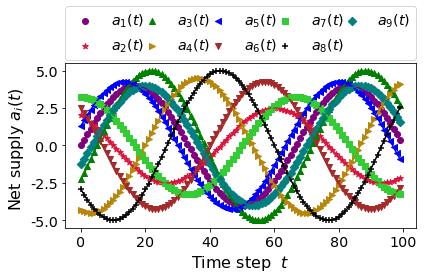}
		\caption{Net supply $a_i(t)$ in Example 1.}
		\label{fig1}
	\end{figure}
We solve the social welfare maximization problem in \eqref{eq4} to obtain the optimal solution $(\mathbf U^\star_i, \mathbf p_i^\star)$, and the optimal locational prices $ \lambda_i^\ast(t)$  as the combination of the Lagrange multipliers associated with the energy balance constraint and the voltage constraints for $i \in \mathcal{N}$. The optimal locational prices are depicted in Fig. \ref{fig3}.
\begin{figure}[!t]
	\centering
	\includegraphics[width=3 in]{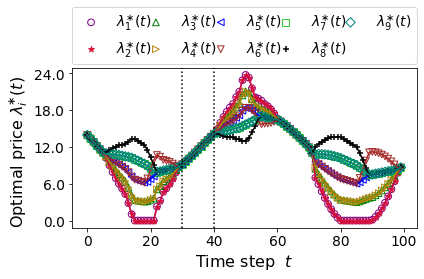}
	\caption{Locational prices  $\lambda_i^\ast(t)$ in Example 1.}
	\label{fig3}
\end{figure}
Considering $ \lambda^\ast_i(t)$, we solve \eqref{opt_LTD_1} to obtain the competitive equilibrium $(\mathbf{U}_i^\ast, \mathbf{p}_i^\ast)$ for $i \in \mathcal{N}$. For illustration,  we depict $(\mathbf U^\star_i, \mathbf p_i^\star)$ and $(\mathbf{U}_i^\ast, \mathbf{p}_i^\ast)$ for $i=1,5,8$ in Fig. \ref{fig4}. 
\begin{figure}[!t]
	\centering
	\includegraphics[width=3.35 in]{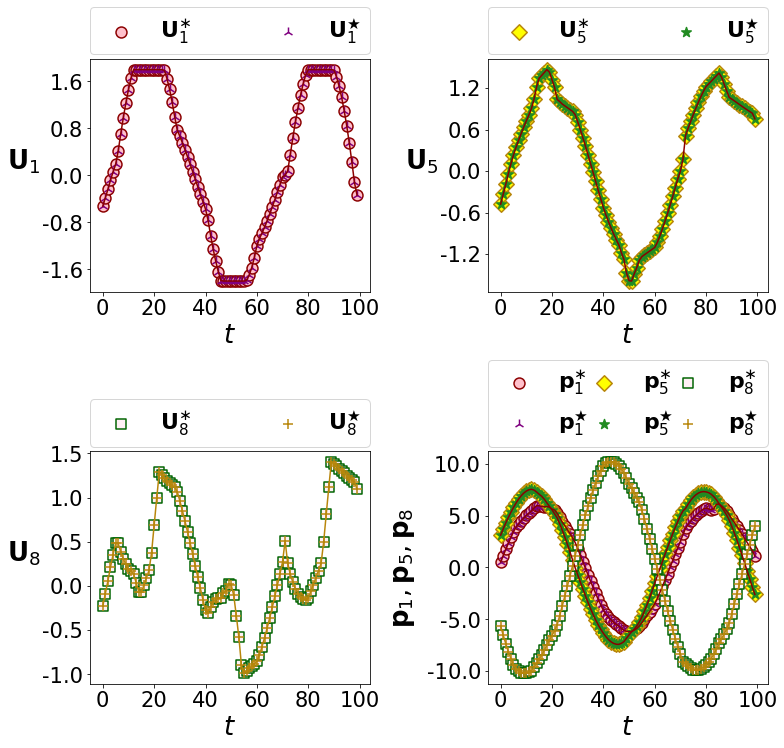}
	\caption{Competitive equilibrium and social welfare maximization solution in Example 1.}
	\label{fig4}
\end{figure}
Additionally, Fig. \ref{fig5} illustrates $\tilde v_i(t):=  \sum_{k =1}^n R_{ik}  p_k^\ast(t)$  for $i \in \mathcal{N}$, where the horizontal dashed lines represent $\underline v_i$ and $\overline{ v}_i$.
\begin{figure}[!t]
	\centering
	\includegraphics[width=3 in]{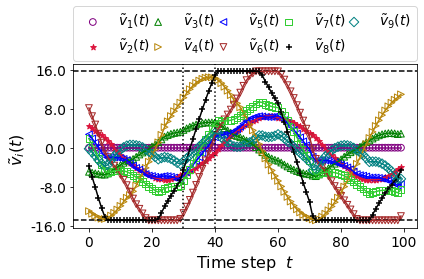}
	\caption{The $\tilde v_i(t)=  \sum_{k =1}^n R_{ik}  p_k^\ast(t)$ in Example 1.}
	\label{fig5}
\end{figure}
Our observations are as follows.
\begin{itemize}
	\item Fig. \ref{fig3} indicates that $\lambda_i^\ast(t)\geq 0$ for $i \in \mathcal{N}$, $t \in \mathcal{T}$, validating Proposition  \ref{prop_1}.
	
	\item Fig. \ref{fig4} illustrates that the competitive equilibrium and the social welfare maximization solution coincide,  validating Theorem \ref{theorem1}.
	
	\item  Considering Fig. \ref{fig5}, when $\underline v_i< \sum_{k =1}^n R_{ik}  p_k^\ast(t) < \overline{ v}_i$ for all $i \in \mathcal{N}$ (e.g., $30 <  t < 40$), we have $\lambda_i^\ast(t)=\lambda^\ast(t)$ as shown in Fig.  \ref{fig3}. This is consistent with Proposition \ref{prop_2}.
\end{itemize}
\end{example}
\subsection{Synthetic Microgrid ($3$-Dimensional Loads)}
\begin{example}
Each node $i\in \mathcal{N}$ is associated with the dynamics in \eqref{eq_state} with the following state-space matrices
\begin{equation}
	\small
	\begin{aligned}
		&{\mathbf A_i} = \left[ {\begin{array}{*{20}{c}}
				{ 1.1}&{ 0}&{0}\\
				{0}&{  0.6}&{0}\\
				{0}&{ 0}&{-0.8}
		\end{array}} \right], \, {\mathbf B_i} = \left[ {\begin{array}{*{20}{c}}
				4&5\\
				2&1\\
				3&5
		\end{array}} \right].
	\end{aligned}
\end{equation}
For $i \in \mathcal{N}$, suppose $\mathbf x_i(0)=(8, 10, 12)^\top$ which lies in the maximal $\mathscr{U}$-invariant set $X_{\text{max}}$. We denote by $\mathbf I$  the identity matrix with an appropriate dimension. Let $\Delta=0.83$ and
\begin{equation}
	\begin{gathered}
		\begin{aligned}	
			f_i(\mathbf x_i(t), \mathbf u_i(t)) &= - \mathbf x_i^\top(t) \mathbf Q_i \mathbf x_i(t)  - \mathbf u_i^\top(t) \mathbf R_i \mathbf u_i(t),\\
			\Phi_i(\mathbf x_i(T)) &= - \mathbf x_i^\top(T) \mathbf Q_i \mathbf x_i(T),\\
			h_i(\mathbf u_i(t)) &=  \mathbf u_i^\top(t)\mathbf H_i \mathbf u_i(t),
		\end{aligned}
	\end{gathered}
\end{equation}
where for $i \in \mathcal{N}$, we have $\mathbf Q_i = 0.5 \mathbf I$, $\mathbf R_i=0.3 \mathbf I$, $\mathbf H_i=3\mathbf I$.
Consider $a_1(t)=a_4(t)=10$, $a_2(t)=-a_9(t)=9$, $a_3(t)=-7$, $a_5(t)=-5$, $a_6(t)=-8$, $a_7(t)=11$, and $a_8(t)=12$. For $i \in \mathcal{N}$, the states and control inputs are bounded by $-12 \times \pmb 1\leq \mathbf x_i(t) \leq 12\times \pmb 1$ and $-10 \times \pmb 1\leq \mathbf u_i(t)\leq 10 \times  \pmb 1$, respectively, where $\pmb 1$ is a vector of an appropriate dimension whose entries are all $1$. Following the same procedure as in Example 1, the locational prices $\lambda_i^\ast(t)$ are obtained by solving \eqref{eq4}, and depicted in  Fig. \ref{fig6}.
\begin{figure}[!t]
	\centering
	\includegraphics[width=3 in]{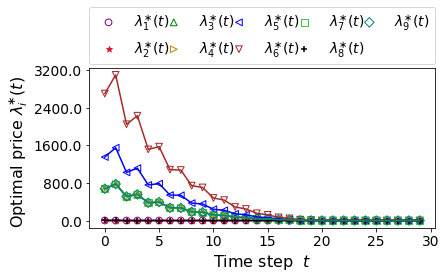}
	\caption{Locational prices $\lambda_i^\ast(t)$ in Example 2.}
	\label{fig6}
\end{figure}
As can be seen, $\lambda_i^\ast(t)=0$ for $t \geq 23$, $i \in \mathcal{N}$, which is consistent with Theorem \ref{theorem2}.
\end{example}
\section{Conclusions}\label{sec:conclusions}
	This paper has considered microgrids with dynamic loads, DERs, and distributed energy allocations. Defining a competitive equilibrium, we presented a local energy market that respects the voltage constraints in the microgrid, inspired by the work presented in \cite{li2015market}. We showed that a competitive equilibrium is equivalent to a Nash equilibrium of a standard game. We proved that under some assumptions,  a competitive equilibrium and a social welfare maximization solution coincide. Considering that the locational prices associated with a competitive equilibrium are different at each node in the microgrid, we investigated a special case leading to the same locational prices. Furthermore, we showed that under some assumptions on the resource supply and network topology, the locational prices decay to zero after a period of time, implying the supply would be more than the demand required to stabilize the system. Finally, we simulated two numerical examples, including EV charging as a real-world application, to validate the results. In future work, extensions to loads with nonlinear dynamics are possible.


\end{document}